\newcommand{\be}{\begin{equation}}
\newcommand{\en}{\end{equation}}
\newcommand{\bea}{\begin{eqnarray}}
\newcommand{\ena}{\end{eqnarray}}
\newcommand{\beano}{\begin{eqnarray*}}
\newcommand{\enano}{\end{eqnarray*}}
\newcommand{\bee}{\begin{enumerate}}
\newcommand{\ene}{\end{enumerate}}
\newcommand{\mc}{\mathcal}
\newcommand{\D}{{\mc D}}
\newcommand{\Sc}{{\cal S}}
\newcommand{\F}{{\cal F}}
\newcommand{\G}{{\cal G}}
\newcommand{\Lc}{{\cal L}}
\newcommand{\1}{1 \!\! 1}
\newcommand{\Hil}{\mc H}
\newtheorem{thm}{Theorem}
\newtheorem{prop}[thm]{Proposition}
\newtheorem{defn}[thm]{Definition}
\newenvironment{proof}{\noindent {\bf Proof --}}{\hfill$\square$ \vspace{3mm}\endtrivlist}
\begin{document}

\thispagestyle{empty}

\vspace*{2cm}

\begin{center}
{\Large \bf From self-adjoint to non self-adjoint harmonic oscillators:\\ physical consequences and mathematical pitfalls}   \vspace{2cm}\\

{\large F. Bagarello}\\
  Dipartimento di Energia, Ingegneria dell'Informazione e Modelli Matematici,\\
Facolt\`a di Ingegneria, Universit\`a di Palermo,\\ I-90128  Palermo, Italy\\
e-mail: fabio.bagarello@unipa.it\\
home page: www.unipa.it/fabio.bagarello

\end{center}

\vspace*{2cm}

\begin{abstract}
\noindent Using as a prototype example the harmonic oscillator we show how losing self-adjointness of the hamiltonian $H$ changes drastically the related functional structure. In particular, we show that even a small deviation from strict self-adjointness of $H$ produces two deep consequences, not well understood in the literature: first of all, the original orthonormal basis of $H$ splits into two families of biorthogonal vectors. These two families are complete but, contrarily to what often claimed for similar systems, none of them is a basis for the Hilbert space $\Hil$. Secondly, the so-called metric operator is unbounded, as well as its inverse.

In the second part of the paper, after an extension of some previous results on the so-called $\D$ pseudo-bosons, we discuss some aspects of our extended harmonic oscillator from this different point of view.

\end{abstract}

\vspace{2cm}


\vfill


\newpage

\section{Introduction}

In recent years a larger and larger community of physicists and mathematicians started to be interested in some non self-adjoint operators having real eigenvalues, such as the celebrated hamiltonian $H=p^2+ix^3$, see \cite{petr} for a mathematically oriented treatment of $H$ and for many references. This original, and maybe restricted interest, was soon complemented by other related aspects, which include, for instance, {\em gain and loss} structures, see for instance \cite{circu}, as well as phase transitions, exceptional points and so on. We refer to \cite{ben,mosta,znorev} for some reviews on what is now called {\em pseudo-hermitian}, $PT$, or {\em crypto-hermitian} quantum mechanics.

In our opinion, most of the literature on these topics suffers from a sort of {\em original sin}: it is mainly written by physicists for physicists. This means that not much care about mathematical details is adopted, and this may have unpleasant consequences on the validity of the results deduced. For instance, in very many papers, the authors work with some non self-adjoint hamiltonian $h$, deducing the eigenstates of both $h$ and $h^\dagger$, and they simply claim that these two families are biorthogonal bases of the Hilbert space $\Hil$ where $h$ is defined. The aim of this paper is to convince the reader that such a procedure is, in fact, very dangerous, since already for extremely simple systems the two sets of eigenstates of $h$ and $h^\dagger$ are not bases at all! This is related, as discussed for instance in \cite{petr}, with the fact that the metric operator, or its inverse, is quite often an unbounded operator. And it is exactly for this reason that this kind of problems simply does not exist in finite dimensional Hilbert spaces, which are used quite often in the literature to produce examples of the general structure behind the models. However, if from one side finite-dimensional examples are usually easily handled, and for this reason they are very often proposed and analyzed, see \cite{zno} for instance, on the other side they hide completely the mathematical problems we have sketched above. Hence, trying to deduce a general structure out of only finite-dimensional systems can be a rather risky business.

In this paper we will consider an example based on the simplest system in quantum mechanics, the harmonic oscillator, and  some non self-adjoint extensions of it introduced by making use of the $\D$ pseudo-bosonic operators recently discussed in \cite{bagnewpb}. Among the other results, we will show that, for our particular system,  there exist sets of vectors in $\Hil$ which are not bases, but which still are complete (or total) in $\Hil$\footnote{Recall that, when orthonormality of a set of vectors $\F$ is lost, the two concepts ($\F$ being a basis or $\F$ being complete) are different: the first implies the second, but the opposite is false, \cite{heil}.}. They are eigenstates of the hamiltonian of the system and of its adjoint.

This article is organized as follows: in the next section we review and extend the definition and few useful results on $\D$ pseudo-bosons ($\D$-PBs). In Section III we introduce the harmonic oscillator and some non self-adjoint extensions. For these extended systems we deduce that the related biorthogonal sets of eigenstates are not Riesz bases. As a matter of fact, we prove that they are not even bases. In Section IV we show how to use $\D$-PBs to deal with these extended oscillators, and we prove that the metric operator is unbounded, as well as its inverse. Section V contains our conclusions.

\section{$\D$ pseudo-bosons}\label{sectII}

We briefly review here few facts and definitions on $\D$-PBs. More details can be found in \cite{bagnewpb}.

Let $\Hil$ be a given Hilbert space with scalar product $\left<.,.\right>$ and related norm $\|.\|$. Let further $a$ and $b$ be two operators
on $\Hil$, with domains $D(a)$ and $D(b)$ respectively, $a^\dagger$ and $b^\dagger$ their adjoint, and let $\D$ be a dense subspace of $\Hil$
such that $a^\sharp\D\subseteq\D$ and $b^\sharp\D\subseteq\D$, where $x^\sharp$ is $x$ or $x^\dagger$. Incidentally, it may be worth noticing
that we are not requiring here that $\D$ coincides with, e.g. $D(a)$ or $D(b)$. Nevertheless, for obvious reasons, $\D\subseteq D(a^\sharp)$
and $\D\subseteq D(b^\sharp)$.

\begin{defn}\label{def21}
The operators $(a,b)$ are $\D$-pseudo bosonic ($\D$-pb) if, for all $f\in\D$, we have
\be
a\,b\,f-b\,a\,f=f.
\label{31}\en
\end{defn}
 Sometimes, to simplify the notation, instead of (\ref{31}) we will simply write $[a,b]=\1$, having in mind that both sides of this equation
have to act on $f\in\D$.

\vspace{2mm}

Our  working assumptions are the following:

\vspace{2mm}

{\bf Assumption $\D$-pb 1.--}  there exists a non-zero $\varphi_{ 0}\in\D$ such that $a\,\varphi_{ 0}=0$.

\vspace{1mm}

{\bf Assumption $\D$-pb 2.--}  there exists a non-zero $\Psi_{ 0}\in\D$ such that $b^\dagger\,\Psi_{ 0}=0$.

\vspace{2mm}

Then, if $(a,b)$ satisfy Definition \ref{def21}, it is obvious that $\varphi_0\in D^\infty(b):=\cap_{k\geq0}D(b^k)$ and that $\Psi_0\in D^\infty(a^\dagger)$, so
that the vectors \be \varphi_n:=\frac{1}{\sqrt{n!}}\,b^n\varphi_0,\qquad \Psi_n:=\frac{1}{\sqrt{n!}}\,{a^\dagger}^n\Psi_0, \label{32}\en
$n\geq0$, can be defined and they all belong to $\D$ and, as a consequence, to the domains of $a^\sharp$, $b^\sharp$ and $N^\sharp$, where $N=ba$. We introduce, as in \cite{bagnewpb}, $\F_\Psi=\{\Psi_{ n}, \,n\geq0\}$ and
$\F_\varphi=\{\varphi_{ n}, \,n\geq0\}$.

It is now simple to deduce the following lowering and raising relations:
\be
\left\{
    \begin{array}{ll}
b\,\varphi_n=\sqrt{n+1}\varphi_{n+1}, \qquad\qquad\quad\,\, n\geq 0,\\
a\,\varphi_0=0,\quad a\varphi_n=\sqrt{n}\,\varphi_{n-1}, \qquad\,\, n\geq 1,\\
a^\dagger\Psi_n=\sqrt{n+1}\Psi_{n+1}, \qquad\qquad\quad\, n\geq 0,\\
b^\dagger\Psi_0=0,\quad b^\dagger\Psi_n=\sqrt{n}\,\Psi_{n-1}, \qquad n\geq 1,\\
       \end{array}
        \right.
\label{33}\en as well as the following eigenvalue equations: $N\varphi_n=n\varphi_n$ and  $N^\dagger\Psi_n=n\Psi_n$, $n\geq0$. As a consequence
of these  equations,  choosing the normalization of $\varphi_0$ and $\Psi_0$ in such a way $\left<\varphi_0,\Psi_0\right>=1$, we deduce that
\be \left<\varphi_n,\Psi_m\right>=\delta_{n,m}, \label{34}\en
 for all $n, m\geq0$. The third assumption we introduced in \cite{bagnewpb} is the following:

\vspace{2mm}

{\bf Assumption $\D$-pb 3.--}  $\F_\varphi$ is a basis for $\Hil$.

\vspace{1mm}

This is equivalent to the request that $\F_\Psi$ is a basis for $\Hil$, \cite{bagnewpb}. In particular, if $\F_\varphi$ and $\F_\Psi$ are Riesz basis for $\Hil$, we have  called our $\D$-PBs {\em regular}.

\vspace{2mm}

{\bf Remark:--}  We recall once again that requiring that $\F_\varphi$ is a basis is much more, for non o.n. sets, than requiring
that $\F_\varphi$ is just complete. Counterexamples can be found in \cite{bagnewpb,heil}.

\vspace{2mm}

In \cite{bagnewpb} we have introduced a weaker version of Assumption $\D$-pb 3, useful for physical applications: for that, let $\G$ be a suitable dense subspace of $\Hil$. Two biorthogonal sets $\F_\eta=\{\eta_n\in\G,\,g\geq0\}$ and $\F_\Phi=\{\Phi_n\in\G,\,g\geq0\}$ have been called {\em $\G$-quasi bases} if, for all $f, g\in \G$, the following holds:
\be
\left<f,g\right>=\sum_{n\geq0}\left<f,\eta_n\right>\left<\Phi_n,g\right>=\sum_{n\geq0}\left<f,\Phi_n\right>\left<\eta_n,g\right>.
\label{35}
\en
Is is clear that, while Assumption $\D$-pb 3 implies (\ref{35}), the reverse is false. However, if $\F_\eta$ and $\F_\Phi$ satisfy (\ref{35}), we still have some (weak) form of resolution of the identity.  Now Assumption $\D$-pb 3 is replaced by the following:

\vspace{2mm}

{\bf Assumption $\D$-pbw 3.--}  For some subspace $\G$ dense in $\Hil$, $\F_\varphi$ and $\F_\Psi$ are $\G$-quasi bases.

\subsection{$\Theta$-conjugate operators for $\D$-quasi bases}

In this section we slightly refine the structure. Notice that, with respect to what done in \cite{bagnewpb}, we will here assume that   Assumption $\D$-pb 1,  $\D$-pb 2, and  $\D$-pbw 3 are satisfied, with $\G\equiv\D$. In other words, we will not assume  $\D$-pb 3, since this Assumption, even if it is very often taken for granted in the physical literature on non self-adjoint hamiltonians, is not  satisfied even in our simple extended harmonic oscillator, see Section III.

Let us consider a self-adjoint, invertible, operator $\Theta$, which leaves, together with $\Theta^{-1}$, $\D$ invariant: $\Theta\D\subseteq\D$, $\Theta^{-1}\D\subseteq\D$. Then, as in \cite{bagnewpb}, we say that $(a,b^\dagger)$ are $\Theta-$conjugate if $af=\Theta^{-1}b^\dagger\,\Theta\,f$, for all $f\in\D$.

Replacing $\D$-pb 3 with $\D$-pbw 3 does not influence the fact that, for instance, $(a,b^\dagger)$ are $\Theta-$conjugate if and only if  $(b,a^\dagger)$ are $\Theta-$conjugate. This is because, see \cite{bagnewpb}, the proof of this equivalence does not make any use of the nature of the sets $\F_\varphi$ and $\F_\Psi$. On the other hand, this is used in \cite{bagnewpb} to prove that
 $(a,b^\dagger)$ are $\Theta-$conjugate if and only if $\Psi_n=\Theta\varphi_n$, for all $n\geq0$,
so one may imagine that this is no longer true in our new hypotheses. However, this is not so, and we can in fact prove the following
\begin{prop}\label{prop1}
Assume that $\F_\varphi$ and $\F_\Psi$ are $\D$-quasi bases for $\Hil$. Then the operators $(a,b^\dagger)$ are $\Theta-$conjugate if and only if $\Psi_n=\Theta\varphi_n$, for all $n\geq0$.
\end{prop}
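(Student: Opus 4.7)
The plan is to prove the two implications separately, leveraging the $\D$-quasi basis property of $\F_\varphi$ and $\F_\Psi$ in both. A standing observation used throughout is that each of these families is \emph{complete} on $\D$: for instance, if $h \in \D$ satisfies $\langle \Psi_n, h \rangle = 0$ for every $n$, then applying expansion (\ref{35}) to an arbitrary $g \in \D$ gives $\langle g, h \rangle = \sum_n \langle g, \varphi_n \rangle \langle \Psi_n, h \rangle = 0$, so $h = 0$ by density of $\D$; the analogous conclusion holds for $\F_\varphi$.

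For the ``if'' direction, assume $\Psi_n = \Theta \varphi_n$ for all $n \geq 0$ and fix $f \in \D$. Both $af$ and $\Theta^{-1} b^\dagger \Theta f$ lie in $\D$, thanks to $\Theta^{\pm 1}\D \subseteq \D$ and $a^\sharp \D, b^\sharp \D \subseteq \D$, so it suffices to check $\langle \Psi_m, af \rangle = \langle \Psi_m, \Theta^{-1} b^\dagger \Theta f \rangle$ for every $m$. The left side equals $\sqrt{m+1}\, \langle \Psi_{m+1}, f \rangle$ via $a^\dagger \Psi_m = \sqrt{m+1}\, \Psi_{m+1}$. For the right side I would push $\Theta^{-1}$ onto $\Psi_m$ by self-adjointness (so $\Theta^{-1} \Psi_m = \varphi_m$), move $b^\dagger$ by adjunction (so $b\varphi_m = \sqrt{m+1}\, \varphi_{m+1}$), and then push $\Theta$ back across (recovering $\Psi_{m+1}$), arriving at the same expression. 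Completeness of $\F_\Psi$ on $\D$ then yields $af = \Theta^{-1} b^\dagger \Theta f$.

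For the ``only if'' direction, set $\xi_n := \Theta \varphi_n \in \D$; the goal is $\xi_n = \Psi_n$. Rewriting $\Theta$-conjugacy as $b^\dagger \Theta f = \Theta a f$ on $\D$ immediately delivers the lowering recursion $b^\dagger \xi_n = \sqrt{n}\, \xi_{n-1}$ for $n \geq 1$, together with $b^\dagger \xi_0 = 0$. A parallel calculation, testing $\langle f, a^\dagger \xi_n \rangle$ against $f \in \D$ and using only $\langle \Theta u, v \rangle = \langle u, \Theta v \rangle$ for $u, v \in \D$, provides the raising recursion $a^\dagger \xi_n = \sqrt{n+1}\, \xi_{n+1}$. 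From $b^\dagger \xi_0 = 0$ one reads $\langle \varphi_k, \xi_0 \rangle = 0$ for $k \geq 1$, and from $a\varphi_0 = 0$ together with the raising recursion one reads $\langle \varphi_0, \xi_k \rangle = 0$ for $k \geq 1$. Iterating the identity $\langle \varphi_m, \xi_n \rangle = \sqrt{n/m}\, \langle \varphi_{m-1}, \xi_{n-1} \rangle$ (derived from the two recursions) reduces any matrix element to one of the vanishing cases, except when $m = n$, in which case it equals $\langle \varphi_0, \Theta \varphi_0 \rangle = \langle \varphi_0, \Psi_0 \rangle = 1$ by the normalization implicit in the statement. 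Completeness of $\F_\varphi$ on $\D$ then forces $\xi_n = \Psi_n$.

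The main obstacle I anticipate is keeping the adjoint manipulations rigorous: since $\Theta$ is not assumed bounded, shortcuts like ``take adjoints of $a = \Theta^{-1} b^\dagger \Theta$ to get $a^\dagger = \Theta b \Theta^{-1}$'' are illegitimate in general and must be replaced by sesquilinear identities proved on the invariant core $\D$, relying solely on $\langle \Theta u, v \rangle = \langle u, \Theta v \rangle$ for $u, v \in \D$ and on the invariance properties of $\D$ under $a^\sharp$, $b^\sharp$, and $\Theta^{\pm 1}$. A secondary delicacy is the scalar in the forward direction, which imposes the consistency condition $\langle \varphi_0, \Theta \varphi_0 \rangle = 1$ between the normalization of $\Theta$ and that of $\Psi_0$.
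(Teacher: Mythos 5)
Your proof is correct and follows essentially the same route as the paper: both directions rest on the completeness in $\D$ of $\F_\varphi$ and $\F_\Psi$ extracted from the quasi-basis expansion (\ref{35}), combined with sesquilinear manipulations on the invariant domain $\D$ and the normalization $\left<\varphi_0,\Theta\varphi_0\right>=1$ (which the paper fixes at the start of its proof, exactly the consistency condition you flag). The only differences are cosmetic: you derive the biorthogonality of $\{\Theta\varphi_n\}$ with $\F_\varphi$ explicitly via the raising/lowering recursions where the paper defers this computation to its earlier reference, and in the converse direction you test $af-\Theta^{-1}b^\dagger\Theta f$ against $\F_\Psi$ instead of testing $\left(\Theta a\Theta^{-1}-b^\dagger\right)f$ against $\F_\varphi$ as the paper does.
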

\begin{proof}
The proof is quite similar to that in \cite{bagnewpb}, so that here we will stress only the differences. Notice that, as in \cite{bagnewpb}, we will use the following normalization: $\left<\varphi_0,\Theta\varphi_0\right>=1$.

Let us first assume that $(a,b^\dagger)$ are $\Theta-$conjugate. Then, as in \cite{bagnewpb}, we can check that $\F_{\tilde\varphi}=\{\tilde\varphi_n:=\Theta\varphi_n,\,n\geq0\}$ is biorthogonal to $\F_\varphi$. Now, it is easy to see that both $\F_\varphi$ and $\F_\Psi$ are complete in $\D$. In other words, if $f\in\D$ is orthogonal to all the $\varphi_n$'s or to all the $\Psi_n$'s, then $f=0$. Hence, since for all fixed $k$
$$
\left<\tilde\varphi_k-\Psi_k,\varphi_n\right>=\left<\tilde\varphi_k,\varphi_n\right>-\left<\Psi_k,\varphi_n\right>=\delta_{k,n}-\delta_{k,n}=0,
$$
 $\forall\, n\geq0$, and since $\tilde\varphi_k-\Psi_k$ belongs to $\D$, we conclude that $\tilde\varphi_k=\Psi_k$ for each $k$. Hence $\Psi_k=\Theta\varphi_k$.

\vspace{2mm}

Let us now assume that $\Psi_n=\Theta\varphi_n$, for all $n\geq0$. Then, as in \cite{bagnewpb}, we deduce that, taking $f$  in $\D$,
$$
\left<\left(\Theta\,a\,\Theta^{-1}-b^\dagger\right)f,\varphi_{n}\right>=\left<f,\left(\Theta^{-1}\,a^\dagger\,\Theta-b\right)\varphi_{n}\right>=0,
$$
for all $n\geq0$. Hence, since $\F_\varphi$ is complete in $\D$, we conclude that $\left(\Theta\,a\,\Theta^{-1}-b^\dagger\right)f=0$ for each $f\in\D$, so that $(b^\dagger,a)$ are $\Theta^{-1}$-conjugate, which in turns implies our statement.

\end{proof}

The positivity of $\Theta$, proved in \cite{bagnewpb} under stronger assumptions, can also be deduced in the present settings, again with a really minor difference. We have

\begin{prop}\label{prop2}
If $(a,b^\dagger)$ are $\Theta-$conjugate, then $\left<f,\Theta f\right>>0$ for all non zero $f\in \D$.
\end{prop}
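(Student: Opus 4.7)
The plan is to express $\langle f, \Theta f \rangle$ as a manifestly nonnegative quantity by feeding $f$ and $\Theta f$ into the $\D$-quasi basis expansion (\ref{35}), then exploiting Proposition \ref{prop1} and the self-adjointness of $\Theta$.

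First I would observe that $\Theta f \in \D$ (since $\Theta$ leaves $\D$ invariant) and that $\langle f, \Theta f \rangle$ is real (since $\Theta = \Theta^\dagger$), so $\langle f, \Theta f \rangle = \langle \Theta f, f \rangle$. Applying (\ref{35}) to the pair $(\Theta f, f) \in \D \times \D$ with the $\F_\varphi$/$\F_\Psi$ decomposition gives
\be
\langle \Theta f, f \rangle = \sum_{n \geq 0} \langle \Theta f, \varphi_n \rangle \, \langle \Psi_n, f \rangle.
\label{posplan1}
\en
Next I would move $\Theta$ onto $\varphi_n$ using self-adjointness, then invoke Proposition \ref{prop1} (which applies since $(a, b^\dagger)$ are assumed $\Theta$-conjugate) to write $\Theta \varphi_n = \Psi_n$. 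This turns the first factor in (\ref{posplan1}) into $\langle f, \Psi_n \rangle$, so
\be
\langle f, \Theta f \rangle = \sum_{n \geq 0} \langle f, \Psi_n \rangle \, \langle \Psi_n, f \rangle = \sum_{n \geq 0} |\langle f, \Psi_n \rangle|^2 \geq 0.
\label{posplan2}
\en

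For the strict inequality I would argue by contradiction: if $\langle f, \Theta f \rangle = 0$ for some $f \in \D$, then every term in (\ref{posplan2}) vanishes, i.e.\ $\langle f, \Psi_n \rangle = 0$ for all $n \geq 0$. The proof of Proposition \ref{prop1} has already recorded that $\F_\Psi$ is complete in $\D$ (an immediate consequence of the quasi-basis identity (\ref{35})), so any $f \in \D$ orthogonal to every $\Psi_n$ must be zero.

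There is no real obstacle here beyond keeping track of where self-adjointness is used and making sure both arguments lie in $\D$ before invoking (\ref{35}); the only mild subtlety is that the completeness of $\F_\Psi$ must be drawn from the quasi-basis hypothesis rather than from a genuine basis hypothesis, but that is already packaged in Proposition \ref{prop1}.
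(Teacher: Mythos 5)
Your proposal is correct and follows essentially the same route as the paper: expand $\left<f,\Theta f\right>$ via the $\D$-quasi basis identity, use $\Psi_n=\Theta\varphi_n$ (Proposition \ref{prop1}) together with self-adjointness of $\Theta$ to obtain $\sum_n|\left<f,\Psi_n\right>|^2$, and conclude strict positivity from the completeness of $\F_\Psi$ in $\D$. The only difference is the cosmetic one of applying the expansion to $(\Theta f,f)$ instead of $(f,\Theta f)$.
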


\begin{proof}

Since both $f$ and $\Theta f$ belong to $\D$, and since  $\F_\varphi$ and $\F_\Psi$ are $\D$-quasi bases for $\Hil$, the following expansion holds

$$
\left<f,\Theta f\right>=\sum_n\left<f,\Psi_n\right>\left<\varphi_n,\Theta f\right>=\sum_n\left<f,\Psi_n\right>\left<\Theta\varphi_n, f\right>=\sum_n\left<f,\Psi_n\right>\left<\Psi_n, f\right>=\sum_n|\left<f,\Psi_n\right>|^2,
$$
which is surely strictly positive if $f\neq0$, due to the fact that $\F_\Psi$ is complete in $\D$.

\end{proof}

Notice that, in \cite{bagnewpb}, $f$ was any vector in $D(\Theta)$, which could be larger than $\D$. Here we need to restrict to $\D$. Notice also that, not surprisingly, we can deduce that
$Nf=\Theta^{-1}N^\dagger\Theta f$, for all $f\in \D$.

\vspace{2mm}

We end this section with a final remark: it is clear by the definition that $(a,b^\dagger)$ are $\Theta-$conjugate if and only if $(a,b^\dagger)$ are $\Theta_k-$conjugate, where $\Theta_k:=k\,\Theta$, for all possible choices of non zero real $k$. Here, reality of $k$ is needed to ensure that $\Theta_k$ is self-adjoint. Notice that $k$ could also be negative, in principle. This could seem to be in contradiction with Propositions \ref{prop1} and \ref{prop2}. In fact, this is not so, since these results are deduced under the requirement that $\left<\varphi_0,\Theta\varphi_0\right>=1$, which of course fixes the value of the constant $k$.

\section{The harmonic oscillator: loosing self-adjointness}

This section is devoted to a detailed analysis of the {\em shifted} harmonic oscillator, and of some of its possible non self-adjoint extensions. It may be worth stressing that in the literature several such extensions exist, in one or more spatial dimensions, see \cite{ben2} and references therein for some examples. We should also mention that part of the results we will discuss in  this section are somehow related to a similar model we have recently introduced in \cite{bagpb2}, but with a rather different perspective.

The original ingredients of our analysis are the self-adjoint position and momentum operators $x$ and $p=-i\frac{d}{dx}$, satisfying $[x,p]=i\1$, and the standard annihilation and creation operators $a=\frac{1}{\sqrt{2}}(x+ip)$ and $a^\dagger=\frac{1}{\sqrt{2}}(x-ip)$ constructed out of $x$ and $p$, which obey $[a,a^\dagger]=\1$. Here $\1$ is the identity operator on $\Hil=\Lc^2(\Bbb R)$.

Let now fix $k\in \Bbb R$, $\alpha, \beta\in \Bbb C$, and let us introduce the operators
\be
c:=a+k,\qquad A=a+\alpha,\qquad B=a^\dagger+\overline{\beta},
\label{41}\en
as well as their adjoints $c^\dagger=a^\dagger+k$, $A^\dagger=a^\dagger+\overline{\alpha}$ and $B^\dagger=a+\beta$. The commutation rules
\be
[c,c^\dagger]=[c,A^\dagger]=[c,B]=[A,c^\dagger]=[A,A^\dagger]=[A,B]=[B,c]=[B,A]=[B,B^\dagger]=\1,
\label{42}\en
suggest that $(c,c^\dagger)$, $(A,A^\dagger)$ and $(B,B^\dagger)$ are bosonic operators for all choices of $k$, $\alpha$ and $\beta$, while, for instance, the pairs $(A,c^\dagger)$, $(A,B)$ or $(B,c)$ are (at least formally), pseudo-bosonic. Therefore, we could introduce several self-adjoint, and non self-adjoint, number operators, like $\hat n=c^\dagger c$, $A^\dagger A$, $BB^\dagger$, $N=BA$ and $N^\dagger=A^\dagger B^\dagger$, and so on. The main result of our analysis will allow us to conclude that, while the eigenstates of $\hat n$ produce an orthonormal (o.n.) basis for $\Hil$, the eigenstates of $N$ (or those of $N^\dagger$) are not even a basis.
For that we will make use of the unitary displacement operator
\be
D(z)=e^{\overline{z}\,a-za^\dagger}=e^{-iz_iz_r}e^{-i\sqrt{2}z_ix}e^{i\sqrt{2}z_rp},
\label{43}\en
where $z=z_r+iz_i$. The role of $D(z)$ is important, since we can write $c=D(k)aD^{-1}(k)$, $A=D(\alpha)aD^{-1}(\alpha)$ and $B=D(\beta)a^\dagger D^{-1}(\beta)$. Hence, calling $\hat n_0:=a^\dagger a$, it is clear that $\hat n=D(k)\hat n_0 D^{-1}(k)$, $N=D(\beta)a^\dagger D^{-1}(\beta)D(\alpha)aD^{-1}(\alpha)$ and $N^\dagger=D(\alpha)a^\dagger D^{-1}(\alpha)D(\beta)aD^{-1}(\beta)$. In particular, we see that $N=N^\dagger$ if $\alpha=\beta$, but not in general.

\subsection{The self-adjoint shifted harmonic oscillator}

Let us first recall that, in coordinate representation, the normalized vacuum of $a$, $a\,e_0(x)=0$, is $e_0(x)=\frac{1}{\pi^{1/4}}\,e^{-x^2/2}$, and that the other eigenstates of $\hat n_0$ can be written as
$$
e_n(x)=\frac{1}{\sqrt{n!}}(a^\dagger)^n\,e_0(x)=\frac{1}{\sqrt{2^nn!\sqrt{\pi}}}\,H_n(x)\,e^{-x^2/2},
$$
where $H_n(x)$ is the $n$-th Hermite polynomial. The eigenstates $\Phi_n(x)$ of $\hat n$ can be easily deduced, both with a direct computation, or from the previous ones, simply because $c\,\Phi_0(x)=0$ produces, choosing properly the normalization, $\Phi_0(x)=D(k)e_0(x)$. This relation can be extended to the other functions of the two sets, $\F_e=\{e_n(x), n\geq0\}$ and $\F_\Phi=\{\Phi_n(x), n\geq0\}$. Indeed we find
\be
\Phi_n(x)=D(k)e_n(x)=e^{i\sqrt{2}kp}e_n(x)=e_n(x+\sqrt{2}k),
\label{44}\en
for all $n\geq0$ and for all real choices of $k$. In particular, the last equality follows from (\ref{43}). Since $\hat n \Phi_n(x)=n\Phi_n(x)$, we conclude that the eigenstates of the self-adjoint operator $\hat n$ are just the translated version of those of $\hat n_0$. They are clearly o.n., complete\footnote{We have previously introduces the notion of {\em completeness in $\D$}. To simplify the notation, here and in the following, we will simply say {\em complete} to mean complete in $\Hil$.}, and span all the Hilbert space. These properties could be easily deduced from the fact that $D(k)$ is unitary\footnote{We include here the proofs of the statements since they are important to show that, as soon as we move from unitary (and therefore bounded) to unbounded operators, many of the apparently obvious properties of the systems are simply lost.}. In fact, for instance, if $f\in \Hil$ is orthogonal to all the $\Phi_n(x)$'s, then for all $n\geq0$ we have
$$
0=\left<f, \Phi_n\right>=\left<f, D(k)e_n\right>=\left<D^{-1}(k)f, e_n\right> \qquad \Rightarrow\qquad D^{-1}(k)f=0,
$$
since $\F_e$ is complete. Hence, $f=0$, which implies that $\F_\Phi$ is complete as well. To prove that $\F_\Phi$ is also a basis for $\Hil$, we use the fact that, $\forall g\in\Hil$, $g=\sum_{n\geq0}\left<e_n,g\right>e_n$. Then we have, using the fact that $D(k)$ is continuous and the relation between $e_n$ and $\Phi_n$,
$$
f=D(k)\left(D^{-1}(k)f\right)=D(k)\left(\sum_{n\geq0}\left<e_n,D^{-1}(k)f\right>e_n\right)=$$
$$=\sum_{n\geq0}\left<D(k)e_n,f\right>D(k)e_n=\sum_{n\geq0}\left<\Phi_n,f\right>\Phi_n,
$$
 for all $f\in\Hil$. It is important to stress that what we have done here is only possible since $D(k)$ and $D^{-1}(k)$ are bounded. Otherwise, for instance, in the proof of the completeness of $\F_\Phi$ we should have taken $f$ in the domain of $D^{-1}(k)$, and this would not allow us to conclude. Also, in the previous equation, $D(k)$ could not be {\em taken inside} the infinite sum on $n$, since, in this case, there is no guarantee that the series $\sum_{n=0}^N\left<D(k)e_n,f\right>D(k)e_n$ converge.

\subsection{The non self-adjoint shifted harmonic oscillator}

Among the possible generalizations of the number operators $\hat n_0$ and $\hat n$, we could consider $A^\dagger A$ or $BB^\dagger$. However, since these appear both self-adjoint,  not many differences are expected with respect to what we have deduced previously. For instance, if we act with powers of $A^\dagger$ on the vacuum of $A$, $\varphi_0$, again we get an o.n. basis for $\Hil$, whose $n$-th vector can be written as $D(\alpha)e_n(x)$, and which satisfies the eigenvalue equation $(A^\dagger A)\left(D(\alpha)e_n(x)\right)=n\left(D(\alpha)e_n(x)\right)$. For our purposes it is more interesting to act on $\varphi_0$ with powers of $B$, and this is what we will do in some details here.

First, let us observe that, choosing a suitable normalization, $A\varphi_0=0$ if $\varphi_0=D(\alpha)e_0$. Then we have, see Section II,
\be
\varphi_n=\frac{1}{\sqrt{n!}}\,B^n\varphi_0=\frac{1}{\sqrt{n!}}\,D(\beta)(a^\dagger)^n
D^{-1}(\beta)D(\alpha)\,e_0=V(\alpha,\beta)\,e_n,
\label{45}\en
where we have introduced the operator $V(\alpha,\beta)=e^{\frac{1}{2}\alpha(\overline{\beta}-\overline{\alpha})}e^{\overline{\beta}a-\alpha a^\dagger}$. It is important to stress that $V(\alpha,\alpha)=D(\alpha)$, which means that, for some values of, say, $\beta$ the operator $V(\alpha,\beta)$ is unitary and, therefore, bounded. However, if $\alpha\neq\beta$, we will see later that $V(\alpha,\beta)$ is unbounded. Due to (\ref{45}), each $e_n$ belongs to its domain, which is dense in $\Hil$ since it contains the set of all the linear combinations of the $e_n$'s.

As discussed in Section II, the biorthogonal set $\Psi_n$ is constructed by first looking for the vacuum of $B^\dagger$: $B^\dagger\Psi_0=0$. This is satisfied if $a(D^{-1}(\beta)e_0)=0$, and then we deduce that $\Psi_0=\mu(\alpha,\beta)D(\beta)e_0$, where $\mu(\alpha,\beta):=e^{\frac{1}{2}(|\alpha|^2+|\beta|^2)-\beta\,\overline{\alpha}}$ is a suitable normalization, see below, needed ensure that $\left<\varphi_0,\Psi_0\right>=1$. If we act on $\Psi_0$ with powers of $B$ we construct an orthogonal basis for $\Hil$ of eigenstates of $BB^\dagger$. Suppose instead that we are interested in finding the eigenstates of $N^\dagger=A^\dagger B^\dagger$. Then, following Section II, we construct the new vectors
\be
\Psi_n=\frac{1}{\sqrt{n!}}\,(A^\dagger)^n\Psi_0=\frac{\mu(\alpha,\beta)}{\sqrt{n!}}\,D(\alpha)(a^\dagger)^n
D^{-1}(\alpha)D(\beta)\,e_0=\mu(\alpha,\beta)V(\beta,\alpha)\,e_n,
\label{46}\en
where we stress that $V(\beta,\alpha)$ appears rather than $V(\alpha,\beta)$, see (\ref{45}). A simple computation shows that, with our previous choice of normalization for $\varphi_0$ and $\Psi_0$,
$$
\left<\varphi_0,\Psi_0\right>=\left<V(\alpha,\beta)e_0,\mu(\alpha,\beta)V(\beta,\alpha)e_0\right>=\mu(\alpha,\beta)
e^{-\frac{1}{2}(|\alpha|^2+|\beta|^2)+\beta\,\overline{\alpha}}=1,
$$
so that $\left<\varphi_n,\Psi_m\right>=\delta_{n,m}$, for all $n,m\geq0$. The vectors of the sets $\F_\varphi$ and $\F_\Psi$ are respectively eigenstates of $N$ and $N^\dagger$, and they are biorthogonal. This is what quite often, in the literature is assumed to be sufficient to claim that $\F_\varphi$ and $\F_\Psi$ are indeed bases for $\Hil$. We will now prove that, on the contrary, neither $\F_\varphi$ nor $\F_\Psi$ can be basis for $\Hil$. Our argument extends that originally given in \cite{bagpb2}.

We start proving that $\lim_{n\rightarrow\infty}\|\varphi_n\|=\infty$. In fact, with a little algebra, we have
$$
\|\varphi_n\|^2=\left<V(\alpha,\beta)\,e_n,V(\alpha,\beta)\,e_n\right>=\|e^{(\overline{\beta}-\overline{\alpha})a}e_n\|^2\geq 1+|\overline{\beta}-\overline{\alpha}|^2n=1+|{\beta}-{\alpha}|^2n,
$$
which clearly diverges with $n$ diverging. The inequality above follows from the fact that, $\forall\gamma\in \Bbb C$,
$$
e^{\gamma a}\,e_n=\sum_{k=0}^n\,\frac{(\gamma a)^k}{k!}\,e_n=e_n+\gamma\sqrt{n}e_{n-1}+\cdots+\frac{\gamma^n}{\sqrt{n!}}\,e_0,
$$
and from the orthogonality of the different $e_k$'s. In a similar way we can also prove that $\lim_{n\rightarrow\infty}\|\Psi_n\|=\infty$: the different choice of normalization, in fact, does not produce any serious difference, at least under this aspect.

{\bf Remark:--} An essential point to stress is that the divergence of both $\|\varphi_n\|$ and $\|\Psi_n\|$ is only true if $\alpha\neq\beta$, which is exactly what we expect since, if $\alpha=\beta$, $\F_\varphi$ and $\F_\Psi$ both coincide with $\F_\Phi$, with $k=\alpha=\beta$, whose vectors are, in particular, normalized.

\vspace{2mm}

A consequence of these results is that $V(\alpha,\beta)$, if $\alpha\neq\beta$, is necessarily unbounded. The reason is simple: suppose this would not be so, and let $M$ be the (finite) norm of  $V(\alpha,\beta)$. Hence, since $\|\varphi_n\|=\|V(\alpha,\beta)e_n\|\leq M$, we would get a contradiction. An immediate consequence of this fact is that neither $\F_\varphi$ nor $\F_\Psi$ can be Riesz basis for $\Hil$, because a Riesz basis is the image of an o.n. basis via a bounded operator, with bounded inverse. However, this would not prevent $\F_\varphi$ or $\F_\Psi$, or both, to be bases. Nevertheless, we will now show that, for  $\alpha\neq\beta$, this is also impossible.

In fact, let us assume for the moment that $\F_\varphi$ is a basis for $\Hil$. Hence each $f\in\Hil$ can be written as $f=\sum_{n=0}^\infty\left<\Psi_n,f\right>\varphi_n=\sum_{n=0}^\infty P_n(f)$, where $P_n(f):=\left<\Psi_n,f\right>\varphi_n$. Since $\|P_n\|=\|\varphi_n\|\|\Psi_n\|\rightarrow\infty$, $\sup_n\|P_n\|=\infty$ and, as a consequence, the above expansion cannot converge for all vectors $f$. Hence, $\F_\varphi$ cannot be a basis for $\Hil$. In a similar way we can conclude that $\F_\Psi$ cannot be a basis for $\Hil$. Nevertheless, we will see in the next section that they still produce some useful weak form of the resolution of the identity.

Summarizing we have that, in our very simple model, {\em the biorthogonal sets of eigenstates of $N$ and $N^\dagger$ are not bases for $\Hil$.} This suggests that most of the claims which one can find in the physical literature on this subject, where the non self-adjoint hamiltonians are by far more complicated than the number operators $N$ and $N^\dagger$ considered here, are wrong or, at least, need to be justified in more details.

We should also stress that, even if they are not bases, both $\F_\varphi$ and $\F_\Psi$ are complete in $\Lc^2(\Bbb R)$. This is because both these sets are made of polynomials times a shifted gaussian. More in details, for instance, $\varphi_n(x)=p_n(x)e^{-(x-\gamma)^2/2}$, where $p_n(x)$ is a polynomial of degree $n$ and $\gamma$ is some fixed shift parameter. Then $\F_\varphi$ is complete, \cite{kolfom}. Our results show explicitly that, when orthonormality is lost, a complete set needs not to be a basis!

We end this section with some similarity relations, which can be proved explicitly, by using simple formulas for $a$ and $a^\dagger$:
$$
V^{-1}(\alpha,\beta)NV(\alpha,\beta)= V^{-1}(\beta,\alpha)N^\dagger V(\beta,\alpha)=\hat n_0,\qquad T^{-1}(\alpha,\beta)N T(\alpha,\beta)=N^\dagger,
$$
where $$T(\alpha,\beta)=V(\alpha,\beta)V^{-1}(\beta,\alpha)=e^{\frac{1}{2}(\alpha\overline{\beta}-\beta\overline{\alpha}+2|\beta|^2-2|\alpha|^2)}
e^{a(\overline{\beta}-\overline{\alpha})+a^\dagger(\beta-\alpha)}.$$ Notice that all these identities cannot be defined in all of $\Hil$, since the operators involved are unbounded.

\section{A $\D$-pb view to the non self-adjoint harmonic oscillator}

Among all the possible choices of formal pseudo-bosonic operators, we will here only consider the pair $(A,B)$, as in Section III.2. We have, see (\ref{41}): $A=\frac{1}{\sqrt{2}}\left(x+\frac{d}{dx}+\sqrt{2}\,\alpha\right)$ and $B=\frac{1}{\sqrt{2}}\left(x-\frac{d}{dx}+\sqrt{2}\,\overline{\beta}\right)$. The vacua of $A$ and $B^\dagger$ are then $\varphi_0(x)=N_\varphi\exp\{-\left(\frac{x^2}{2}+\sqrt{2}\,\alpha\,x\right)\}$ and $\Psi_0(x)=N_\Psi\exp\{-\left(\frac{x^2}{2}+\sqrt{2}\,\beta\,x\right)\}$. Here $N_\varphi$ and $N_\Psi$ must satisfy the following equality:
$$
\overline{N_\varphi}\,N_\Psi=\frac{1}{\sqrt{\pi}}\,e^{-(\beta+\overline{\alpha})^2/2},
$$
which ensure that $\left<\varphi_0,\Psi_0\right>=1$. Of course, $N_\varphi$ and $N_\Psi$ can be related to $\mu(\alpha,\beta)$ introduced
before, but this is not relevant for us. Both vacua belong to the set $\D=\{f(x)\in \Sc(\Bbb R):\,e^{kx}f(x)\in\Sc(\Bbb R), \, \forall k\in
\Bbb R\}$. This set is dense in $\Lc^2(\Bbb R)$, since contains $D(\Bbb R)$, and is stable under the action of $A^\sharp$ and $B^\sharp$.
Assumptions $\D$-pb 1 and $\D$-pb 2 are satisfied. As for Assumption $\D$-pb 3, our previous results show that this does not hold. However, it
is possible to show that Assumption $\D$-pbw 3 is satisfied, with $\G\equiv\D$. Indeed let us take $f,g\in \D$. Then, since $\F_e$ is an o.n.
basis and since $\D\subseteq D(V^\dagger(\alpha,\beta))\cap D(V^{-1}(\alpha,\beta))$, we have {\be
\left<f,g\right>=\left<V^\dagger(\alpha,\beta)f,V^{-1}(\alpha,\beta)g\right>=\sum_n
\left<V^\dagger(\alpha,\beta)f,e_n\right>\left<e_n,V^{-1}(\alpha,\beta)g\right>= \sum_n \left<f,\varphi_n\right>\left<\Psi_n,g\right>,
\label{48meno}\en since $\varphi_n=V(\alpha,\beta)e_n$ and, with few computations, we also find $\Psi_n=(V^\dagger(\alpha,\beta))^{-1}e_n$. }
Equation (\ref{48meno}) shows that, as required by Assumption $\D$-pbw 3, $\F_\varphi$ and $\F_\Psi$  produce a weak form of the resolution of
the identity.

\vspace{2mm}

Let us now put together $\varphi_n=V(\alpha,\beta)\,e_n$ and $\Psi_n=\mu(\alpha,\beta)V(\beta,\alpha)\,e_n$. Then we deduce that \be\Psi_n=\Theta(\alpha,\beta)\varphi_n,\label{47}\en where
\be
\Theta(\alpha,\beta)=\mu(\alpha,\beta)V(\beta,\alpha)V^{-1}(\alpha,\beta)=e^{-\frac{1}{2}|\alpha+\beta|^2+2|\alpha|^2}
e^{a(\overline{\alpha}-\overline{\beta})+a^\dagger(\alpha-\beta)},
\label{48}\en
which can also be written as $\Theta(\alpha,\beta)=e^{|\alpha|^2-|\beta|^2}e^{a(\overline{\alpha}-\overline{\beta})} e^{a^\dagger(\alpha-\beta)}$. We see that $\Theta(\alpha,\beta)$ is self-adjoint, leaves $\D$ invariant together with its inverse, and that $\Theta(\alpha,\alpha)=\1$. Moreover, a simple computation shows that $\Theta^{-1}(\alpha,\beta)B^\dagger \Theta(\alpha,\beta)f=Af$, for all $f\in\D$, so that $(A,B^\dagger)$ are $\Theta(\alpha,\beta)$-conjugate. This, in view of (\ref{47}), is exactly the content of Proposition \ref{prop1}. A simple consequence of this is, for instance, that $\Theta^{-1}(\alpha,\beta)N^\dagger \Theta(\alpha,\beta)f=Nf$, for all $f\in\D$. Finally, for each non zero $f\in\D$, we deduce that $\left<f,\Theta(\alpha,\beta)f\right>=e^{|\alpha|^2-|\beta|^2}\|e^{a^\dagger(\alpha-\beta)}f\|^2$, which is strictly positive. Hence we recover, for our simple model, the general structure and results discussed in Section II.

We should stress that $\Theta(\alpha,\beta)$ is what in the literature is usually called {\em the metric operator}, and an estimate of the kind already used above for $V(\alpha,\beta)$ allows us to conclude that both $\Theta(\alpha,\beta)$ and its inverse are unbounded operators, at least if $\alpha\neq\beta$. Again, this result contradicts what is usually assumed in the literature, i.e. that the metric operator and its inverse are (at least one of them) bounded. This seems to be not so automatic, and need to be checked even in very simple systems.

\section{Conclusions}

After a preliminary section on $\D$-PBs, we have considered some manifestly non self-adjoint extensions of the harmonic oscillator producing two number-like operators related by the adjoint operation in $\Lc^2(\Bbb R)$, $N$ and $N^\dagger$. For these operators we have deduced the related eigenstates, and we have proved that they form two biorthogonal, complete families of $\Hil$ which {\bf are not bases}. This suggests that, when dealing with non self-adjoint hamiltonians, the assumption that their eigenstates form a basis could quite likely be false, while what might remain true is that these eigenstates produce some weaker form of resolution of the identity, as described by Assumption $\D$-pbw 3 in Section II. Also: even for quite simple systems, the metric operator can be unbounded, together with its inverse. Hence, for infinite-dimensional systems, more care is required than that usually used in the physical literature on the subject.

We should also mention that another non trivial output of this paper is that most of the results deduced in \cite{bagnewpb} under the very strong assumption that the  eigenstates of $N$ and $N^\dagger$ are indeed bases, still hold true even if they are simply $\D$-quasi bases.

\section*{Acknowledgements}

This work was partially supported by the University of Palermo.


\begin{thebibliography}{99}


\bibitem{petr} D. Krejcirik and P. Siegl,{\em On the metric operator for the imaginary cubic oscillator}, Phys. Rev. D, {\bf 86}, 121702(R) (2012)

\bibitem{circu}  J. Schindler, A. Li, M. C. Zheng, F. M. Ellis and T. Kottos, {\em Experimental study of active LRC circuits with PT symmetries}, Phys. Rev. A,  {\bf 84}, 040101, (2011)

\bibitem{ben}  C. Bender, {\em Making Sense of Non-Hermitian Hamiltonians}, Rep. Progr.  Phys., {\bf 70},  947-1018 (2007)

\bibitem{mosta} A. Mostafazadeh, {\em Pseudo-Hermitian representation of Quantum Mechanics}, Int. J. Geom. Methods Mod. Phys. {\bf 7}, 1191-1306 (2010)

\bibitem{znorev}
M. Znojil, {\em Three-Hilbert-space formulation of Quantum Mechanics}, SIGMA {\bf 5} 001 (2009)


\bibitem{zno} M. Znojil and M. Tater, {\em CPT-symmetric discrete square well},
Int. J. Theor. Phys., {\bf 50},  982- 990 (2011); M. Znojil, {\em Discrete quantum square well of the first kind},
Phys. Lett. A, {\bf 375},  2503-2509 (2011).

\bibitem{bagnewpb} F. Bagarello, {\em More mathematics for pseudo-bosons},  J. Math. Phys., {\bf 54}, 063512 (2013)

\bibitem{heil} C. Heil, {\em A basis theory primer: expanded edition}, Springer, New York, (2010)

\bibitem{ben2} C. M. Bender, H. F. Jones, {\em Interactions of Hermitian and non-Hermitian Hamiltonians}, J. Phys. A, {\bf 41}, 244006  (2008); Jun-Qing Li, Qian Li, Yan-Gang Miao, {\em Investigation of PT-symmetric Hamiltonian Systems from an Alternative Point of View}, Commun. Theor. Phys., {\bf 58}, 497
(2012); Jun-Qing Li, Yan-Gang Miao, Zhao Xue, {\em Algebraic method for pseudo-Hermitian Hamiltonians}, arXiv:1107.4972 [quant-ph]

\bibitem{bagpb2} F. Bagarello {\em Construction of pseudo-bosons systems},  J. Math. Phys., {\bf 51},  023531 (2010) (10pg)

\bibitem{kolfom} A. Kolmogorov and S. Fomine, {\em El\'ements de la th\'eorie des fonctions et de lanalyse fonctionelle}, Mir (1973)

\end{thebibliography}
\end{document}